\newcommand{\eemph}[1]{\textbf{\textit{#1}}}
\newcommand{\Z}{\mathbb{Z}}
\newcommand{\R}{\mathbb{R}}
\newcommand{\bmat}[1]{\begin{bmatrix}#1\end{bmatrix}}
\def\one{\mathbf{1}}
\def\tp{\mathsf{T}}
\let\mathopfont=\mathrm
\newcommand{\Null}{\mathop{\mathopfont{null}}}
\newcommand{\Span}{\mathop{\mathopfont{span}}}
\newcommand{\rank}{\mathop{\mathopfont{rank}}}
\newcommand{\range}{\mathop{\mathopfont{range}}}
\let\bl\bigl
\let\br\bigr
\newbox\vcbox
\def\vcent#1{\setbox\vcbox\hbox{#1}\raise -0.5\ht\vcbox\hbox{#1}}
\newcommand{\ie}{{\it i.e.}}
\newcommand{\floor}[1]{\lfloor#1\rfloor}
\newcommand{\bt}{bittide\xspace}
\def\itoj{{i \shortto j}}
\def\jtoi{{j \shortto i}}
\def\wu{\omega^\text{u}}
\def\D{\mathbb{D}}
\def\A{\mathbb{A}}
\DeclarePairedDelimiter\abs{\lvert}{\rvert}
\def\din{d^\text{\hskip0.1em in}}
\def\thetass{\theta^\text{ss}}
\def\wss{\omega^\text{ss}}
\def\bwss{{\bar\omega}^\text{ss}}
\def\zetass{\zeta^\text{ss}}
\def\bzetass{{\bar\zeta}^\text{ss}}
\def\betass{\beta^\text{ss}}
\def\betaoff{\beta^\text{off}}
\def\kp{k}
\def\hash{\texttt{\#}}
\DeclareRobustCommand{\shortto}{\mathrel{\mathpalette\short@to\relax}}
\newcommand{\short@to}[2]{\mkern2mu
  \clipbox{{.5\width} 0 0 0}{$\m@th#1\vphantom{+}{\shortrightarrow}$}}
\begin{document}

\title{Modeling Buffer Occupancy in \bt Systems}

\author{%
  Sanjay Lall\footnotesymbol{1}
  \and  Tammo Spalink\footnotesymbol{2}}

\note{Preprint}

\maketitle

\makefootnote{1}{S. Lall is with the Department of Electrical
  Engineering at Stanford University, Stanford, CA 94305, USA, and 
  at Google DeepMind.
  \texttt{lall@stanford.edu}\medskip}

\makefootnote{2}{Tammo Spalink is at Google DeepMind.}

\begin{abstract}

The \bt mechanism enables logically synchronous computation 
across distributed systems by leveraging the continuous frame 
transmission inherent to wired networks such as Ethernet.  
Instead of relying on a global clock, \bt uses a decentralized 
control system to adjust local clock frequencies, ensuring all 
nodes operate with a consistent notion of time by utilizing 
elastic buffers at each node to absorb frequency variations.  
This paper presents an analysis of the steady-state occupancy 
of these elastic buffers, a critical factor influencing system 
latency.  Using a fluid model of the \bt system, we 
prove that buffer occupancy converges and derive an explicit 
formula for the steady-state value in terms of system parameters, 
including network topology, physical latencies, and controller 
gains. This analysis provides valuable insights for optimizing 
buffer sizes and minimizing latency in \bt-based distributed 
systems.

\end{abstract}

\section{Introduction}

Digital circuits achieve deterministic and predictable behavior through 
synchronous logic, where state transitions are precisely aligned to a 
single clock signal. This clock-driven coordination is fundamental to 
the efficient and reliable operation of computations within a single 
integrated circuit.  However, extending this synchronous paradigm to 
the scale of an entire datacenter, encompassing a large network of 
interconnected machines, presents significant challenges. Traditional 
approaches rely on synchronizing individual clocks to a global time 
reference, introducing overheads and scalability bottlenecks.  The 
\bt mechanism offers a novel solution by establishing \emph{logical 
synchrony}, effectively extending the principles of clock-driven 
coordination from the granularity of individual integrated circuits 
to the scale of datacenters.

The \bt system achieves this by leveraging the continuous frame transmission 
inherent to modern wired networks like Ethernet.  Instead of aligning 
clocks to a single time source, \bt ties local clock advancements 
directly to these continuous frame transmissions. This creates a 
decentralized synchronization scheme with minimal overhead, as it 
requires no explicit exchange of timing information~\cite{ls}. 
This decentralized approach, in contrast to traditional clock 
synchronization protocols like PTP~\cite{ptp}, allows for 
significantly improved scalability and robustness. As in 
synchronous digital circuits, events within a single node are 
ordered by the local clock, while events across different nodes are 
ordered by causality, determined by the flow of data frames.

At the core of the \bt mechanism is a distributed feedback control 
system. This system monitors communication rates between adjacent nodes, 
adjusting local clock frequencies to maintain logical synchrony.  
Elastic buffers at each node absorb frequency variations arising from 
imperfect physical clocks.  A critical function of the control system 
is to prevent these buffers from overflowing or underflowing, 
guaranteeing data integrity and system stability.

While prior research~\cite{spalink_2006,bms,ls,res,reframing} introduced 
the \bt mechanism, developed mathematical models, and analyzed control 
performance with respect to frequency synchronization, the long-term 
behavior of buffer occupancy has remained largely unexplored. This 
paper fills this gap by providing a detailed analysis of the steady-state 
buffer occupancy dynamics in a \bt system. We utilize a fluid approximation 
of the abstract frame model~\cite{bms}, simplifying the analysis by ignoring 
quantization and sampling effects.  We prove that buffer occupancy 
converges to a steady-state value and derive an explicit formula for 
this value based on system parameters. This analysis enables
optimizing buffer sizes and minimizing system latency --- a key performance 
metric in distributed systems.  Our results directly relate network 
topology, physical latencies, and controller gains to steady-state buffer 
occupancy, offering valuable insights for designing and deploying 
efficient, logically synchronous datacenter-scale systems.

\section{Notation}
We have a directed graph $\mathcal{G} = (\mathcal{V}, \mathcal{E})$.
Number
the vertices and edges so that $\mathcal V = \{1, \dots,n\}$ and
$\mathcal E = \{ 1, \dots, m\}$.  Define the source incidence matrix $S \in\R^{n \times
  m}$ by
\[
S_{ie} = \begin{cases}
  1 & \text{if  node $i$ is the source of edge $e$} \\
  0 &\text{otherwise}
\end{cases}
\]
and the destination incidence matrix $D \in\R^{n \times m}$ by
\[
D_{ie} = \begin{cases}
  1 & \text{if node $i$ is the destination of edge $e$} \\
  0 &\text{otherwise}
\end{cases}
\]
The usual incidence matrix of the graph is then $B = S - D$.  Let
$\one$ be the vector of all ones, then $S^\tp \one = \one$,
$D^\tp \one = \one$, and 
$B^\tp \one = 0$. 
Let $d^\text{in}_i$ be the in-degree of node $i$
given by
\[
\din_i = \abs{
  \{
  j\in \mathcal V \mid j \to i 
  \}
  }
\]
A directed graph is called \eemph{strongly connected} or
\eemph{irreducible} if for every $i,j$ there exist directed paths
$i\to j$ and $j\to i$. Suppose $A$ is a nonnegative matrix whose
sparsity corresponds to the graph adjacency. Then irreducibility of
$A$ is defined by irreducibility of the graph. A matrix
$Q\in\R^{n\times n}$ is called \eemph{Metzler} if $Q_{ij} \geq 0$ for
all $i\neq j$, and it is called a rate matrix if in addition $Q\one =
0$. If $Q$ is Metzler and irreducible, then there is an eigenvalue
$\lambda^\text{metzler}$ which is real and which has positive left and
right eigenvectors. 

\section{Modeling}

\subsection{The \bt mechanism}

The bittide mechanism operates via the interplay of data transmission
and feedback control, eliminating the need for a global clock
reference, or for communication to exchange explicit time
measurements.

Each node possesses its own independent clock driven by an adjustable
oscillator, together with a number of network interfaces which directly connect
to other \bt nodes. A node sends frames to its neighbors, and by
observing the arrival rate of the data the receiving nodes deduce
information about the relative clock rate at the sender.  The content of these
frames is immaterial to the synchronization process; rather, it is the
act of transmitting and receiving frames that conveys the necessary
synchronization information.

Each node has a collection of FIFOs, called \emph{elastic buffers},
one for each incoming link. These specialized buffers serve as
temporary reservoirs for arriving data frames. With each clock cycle,
a node performs several actions. It removes the first frame from the
head of each elastic buffer, and stores it in memory for the processor
core to make use of. In addition, the processor core supplies a new
frame for each outgoing link. Therefore the rate at which frames are
sent is determined by the clock rate at the node, as is the rate at
which frames are removed from the elastic buffers. Conversely, at the
receiver, frames are added to the tail of the elastic buffer as they arrive.
The rate at which frames arrive is determined by the rate of
the sender's clock, adjusted for the physical latency of the link.

Since physical oscillators are inherently imprecise, there will be
slight variations in frequency between the nodes. Each elastic buffer
is drained at the local clock rate, but filled at the sender's clock
rate. Even a very slight difference in their frequencies will cause
the buffer to rapidly overflow or underflow.  To address this, the
bittide mechanism incorporates a feedback control system in which each
node monitors the occupancy levels of its elastic
buffers.  An increase in a buffer's occupancy signifies that the
node's clock is lagging behind that of the node transmitting data to
that buffer, while a decrease indicates the opposite.  Based on these
observations, each node dynamically adjusts its oscillator frequency,
ensuring that the buffer occupancies remain balanced.

\subsection{Abstract frame model}

We first consider a model of \bt called the \emph{abstract frame
model}, presented in~\cite{bms}. We develop mathematical analysis below
for a simplified fluid version of this model, without quantization.

Each node
$i\in\mathcal{V}$ has a clock with phase $\theta_i(t)$. The
\emph{localticks} of the clock are the times $t$ when $\theta_i(t)$ is
an integer. If $i\to j$ is a directed edge in the graph $\mathcal{G}$,
then with every localtick of the clock at node $i$ a frame is sent
from node $i$ to node $j$. We number the frames in transit by $k\in\Z$
equal to the localtick at the sender at the time of departure. The
frames in transit are those with $k$ satisfying
\[
\floor{\theta_j(t)} - \lambda_\itoj + 1
\leq k
\leq \floor{\theta_i(t)}
\]
We interpret this as follows. The left-hand inequality means that a
frame arrives with each localtick at node $j$, and the right-hand
inequality means that a frame is sent with each localtick at node $i$.
Suppose a frame is sent at localtick $\theta_i(t)=a$.  The clock at
the receiving node $\theta_j$ increases until $\theta_j(t) = a +
\lambda_\itoj$, at which time the frame is popped from the elastic
buffer at node $j$.

The above inequalities imply that the number of frames in transit
is
\[
\nu_\itoj(t) = \floor{\theta_i(t)} - \floor{\theta_j(t)} + \lambda_\itoj
\]
The frames move from node $i$ to node $j$ by first passing over a
physical link with latency $l_\itoj$ and then into the elastic buffer
at node $j$. The number of frames on the physical link is
\[
\gamma_\itoj(t) = \floor{\theta_i(t)} - \floor{\theta_i(t-l_\itoj)}
\]
and hence the occupancy $\beta_\itoj(t)$ of the elastic buffer is
\begin{align*}
  \beta_\itoj(t) &= \nu_\itoj(t) - \gamma_\itoj(t) \\
  &= \floor{\theta_i(t-l_\itoj)} - \floor{\theta_j(t)} + \lambda_\itoj
\end{align*}
The frequency of each oscillator is determined by a proportional
controller based on measurements of the buffer occupancies.
Specifically, the frequency correction at a node is set proportional
to the sum of the relative buffer occupancies for all incoming links.
Here, by \emph{relative}, we mean that we measure
the elastic buffer occupancy relative to the midpoint
of the buffer, so that at equilibrium the buffer will be half full,
and so that occupancies can be both positive and negative.
The oscillator at node $i$ has a base frequency $\wu_i$, which is
not known to the node. However, the oscillator is adjustable,
and the controller chooses the \emph{correction} $c_i$, so that
the oscillator frequency becomes $\wu_i + c_i$. 
This leads to the following model.
\begin{equation}
\begin{aligned}
  \label{eqn:cafm}
  \dot{\theta}_i(t) &= \wu_i +  c_i(t) \\
  c_i(t) & = \kp  \sum_{j \mid j \to i} (\beta_\jtoi(t) - \betaoff)        \\
  \beta_\itoj(t) &=  \floor{\theta_i(t-l_\itoj)} - \floor{\theta_j(t)} + \lambda_\itoj
\end{aligned}
\end{equation}
The main distinction between this model and a physical implementation
of \bt is that in this model the controller continuously updates the
frequency of the oscillators, whereas in a practical system the
controller is applied at a specific sample rate. The model with
sampling is called the abstract frame model, and is specified
in~\cite{bms}. In practice there is also a small amount of time
required for the controller updates to occur.  Both of these phenomena
can have a significant effect in certain parameter regimes, but in
this paper we do not analyze these effects.

\subsection{Fluid model}

\begin{defn}
  We define the list of system parameters $\mathcal{P} =
  (\theta^0,\wu,\lambda,k, \mathcal{G}, l, \betaoff)$. We call
  $\mathcal{P}$ admissible if
  \begin{itemize}
  \item the graph $\mathcal{G}$ is irreducible
  \item the gain is positive, \ie, $k > 0$
  \item the latencies are nonnegative, \ie, $l_e \geq 0$ for all $e\in\mathcal{E}$
  \item the unknown frequencies are positive, \ie, $\wu_i > 0$ for all $i\in\mathcal{V}$
  \end{itemize}
\end{defn}
Given an admissible parameter list $\mathcal{P}$ we will analyze the behavior
of the approximate model
\[
\begin{aligned}
  \dot{\theta}_i(t) &= \wu_i +  c_i(t) \\
  c_i(t) & = \kp  \sum_{j \mid j \to i} (\beta_\jtoi(t) - \betaoff)        \\
  \beta_\itoj(t) &=  \theta_i(t-l_\itoj) - \theta_j(t) + \lambda_\itoj
\end{aligned}
\]
which is a \emph{fluid approximation} of the original model~\eqref{eqn:cafm}.
This model ignores the effects of the granularity of the frames.

For convenience define the delay operator $\D$ which operates
on $\R^m$ valued signals by
\[
(\D z)_e(t) = z_e(t-l_e) \quad \text{for }e=1,\dots,m
\]
Then we may write the above model more concisely as
\begin{equation}
  \label{eqn:model}
  \begin{aligned}
    \dot\theta &= \wu + \kp  D(\beta - \betaoff)  \\
    \beta &= \D S^\tp \floor{\theta} - D^\tp \floor{\theta} + \lambda
  \end{aligned}
\end{equation}
This defines a delay differential equation for $\theta$, whose state at time $t$ is
given by $\theta$ restricted to the interval $[t-l^\text{max}, t]$
where $l^\text{max}$ is the maximum latency of any edge $l^\text{max}
= \max_e l_e$.
Since the forcing term in the delay differential equation is constant, we
make the following definition in terms of the system parameters.

\begin{defn}
  \label{def:v}
  We define the \eemph{input} $v\in\R^n$ by 
  \[
  v = \wu + kD(\lambda - \betaoff)
  \]
\end{defn}
We can also encapsulate the delay operator as
\[
  \A = kD(\D S^\tp - D^\tp)
\]
which allows us to express the dynamics~\eqref{eqn:model} for
$\theta$ in the particularly simple equivalent functional form
\begin{equation}
  \label{eqn:thetaaut}
  \dot\theta =  \A \theta + v
\end{equation}
Solutions to this equation always exist, which can be shown via the method
of steps.

\subsection{Limiting behavior}

Given the graph $\mathcal{G}$ we immediately have the incidence
matrices $D$ and $S$, and $B=S-D$. From these we may define a directed
Laplacian matrix $Q\in\R^{n\times n}$ by
\[
 Q = DB^\tp
\]
Note that $Q$ is a rate matrix, and specifically
\[
Q_{ij} = \begin{cases}
  -\din_i
  & \text{if } i = j \\
  1 & \text{if } j \to i  \in \mathcal{E} \\
  0 & \text{otherwise} 
\end{cases}
\]
Let $z$ be the left Metzler eigenvector of $Q$ which satisfies $z^\tp Q
= 0$, normalized so that $\one^\tp z = 1$ and $z>0$.
Let $L\in\R^{m\times m}$ be the diagonal matrix with $L_{ee} = l_e$, and
and define for convenience 
\[
X = I + kDLS^\tp \qquad H = \frac{\one z^\tp}{z^\tp X \one}
\]
Notice in particular that $HXH = H$, and $X$ is a generalized inverse of $H$.
We will use these in the following. 
\paragraph{Relative equilibrium.} 
For most choices of parameters $\mathcal P$ the
system~\eqref{eqn:model} does not have an equilibrium, but it does
have a type of relative equilibrium, as the following result shows.
\begin{thm}
  \label{thm:eqm}
  Suppose the parameters $\mathcal P$ are admissible.
  Then 
  there exists $\wss,\thetass\in\R^n$ such that
  \[
  \theta(t) = \wss t   + \thetass
  \]
  satisfies~\eqref{eqn:model}. Such $\wss$ is unique, given by
  \[
  \wss = H v
  \]
  and $\thetass$ is not unique in general but must satisfy
  \[
 \qquad kQ \thetass =   (X H - I )v
  \]
\end{thm}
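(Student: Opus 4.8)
The plan is to substitute the ansatz $\theta(t) = \wss t + \thetass$ directly into the functional form $\dot\theta = \A\theta + v$ of~\eqref{eqn:thetaaut} and match terms in $t$. Since $\dot\theta = \wss$ is constant, the whole problem reduces to evaluating $\A\theta(t) = kD(\D S^\tp - D^\tp)\theta(t)$ on a signal that is affine in $t$. The only nontrivial ingredient is the action of $\D$ on a ramp: applied to the source term $S^\tp\wss\,t$ it returns $S^\tp\wss\,t - LS^\tp\wss$ (each edge component is shifted by $l_e$, producing the $-LS^\tp\wss$ offset), whereas it leaves the constant part $S^\tp\thetass$ untouched. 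Collecting terms and using $Q = DB^\tp$ with $B = S - D$, I expect
\[
\A\theta(t) = kQ\wss\,t - kDLS^\tp\wss + kQ\thetass.
\]

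First I would equate the coefficients of $t$ in $\wss = \A\theta(t) + v$, which forces $kQ\wss = 0$, hence $Q\wss = 0$. Because $\mathcal G$ is irreducible, $Q$ is an irreducible rate matrix and its right null space is exactly $\Span\{\one\}$, so $\wss = \alpha\one$ for a scalar $\alpha$ still to be pinned down; this is the statement that all nodes run at a common steady-state frequency. Next I would collect the constant terms. Using $S^\tp\one = \one$, so that $L S^\tp\one = L\one = l$ and $DLS^\tp\one = Dl$, the residual equation becomes
\[
kQ\thetass = \alpha(\one + kDl) - v = \alpha X\one - v,
\]
where I recognize $X\one = \one + kDl$.

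To fix $\alpha$ I would invoke the Fredholm alternative for $Q$. Since $Q$ is an irreducible rate matrix, $\rank Q = n-1$ and $\range Q = z^\perp = \{w : z^\tp w = 0\}$, with $z>0$ the left eigenvector satisfying $z^\tp Q = 0$. Solvability of $kQ\thetass = \alpha X\one - v$ then requires $z^\tp(\alpha X\one - v) = 0$, giving $\alpha = z^\tp v / (z^\tp X\one)$; admissibility ($k>0$, $l\ge 0$) together with $z>0$ makes $z^\tp X\one = z^\tp(\one + kDl) > 0$, so $\alpha$ and hence $H$ are well defined. This yields $\wss = \alpha\one = (\one z^\tp / z^\tp X\one)v = Hv$, and since $\alpha$ is uniquely determined by the solvability condition, $\wss$ is unique. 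Finally, rewriting $\alpha X\one = X(\one z^\tp / z^\tp X\one)v = XHv$ converts the constant-term equation into $kQ\thetass = (XH - I)v$, the claimed identity; existence of $\thetass$ follows from solvability, and non-uniqueness is immediate because any multiple of $\one\in\Null Q$ may be added.

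The main obstacle I anticipate is bookkeeping the delay operator's action on the ramp term, which is precisely where $X = I + kDLS^\tp$ and the correction $-kDLS^\tp\wss$ enter; getting this right is what makes the latency influence the answer rather than collapsing to the zero-latency case $X = I$. The other point needing care is the structural claim $\range Q = z^\perp$, which rests on the Metzler/irreducibility facts quoted earlier, namely the one-dimensional left and right null spaces of $Q$ and the positivity of $z$.
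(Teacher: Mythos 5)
Your proof is correct and follows essentially the same route as the paper: both substitute the affine ansatz, use the identity $\D S^\tp t x = (tI-L)S^\tp x$ to reduce to the linear system $Q\wss = 0$, $X\wss - kQ\thetass = v$, and then pin down the common frequency via the left eigenvector $z$. Your Fredholm-alternative step (solvability of $kQ\thetass = \alpha X\one - v$ iff $z^\tp(\alpha X\one - v)=0$, with $z^\tp X\one>0$) is just a repackaging of the paper's rank argument that $X\one\notin\range(Q)$ and hence $\rank\bigl(\bmat{X\one & -kQ}\bigr)=n$.
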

\begin{proof}
  Such a function is a solution if and only if
  \begin{equation}
    \label{eqn:releq}
    \bmat{Q & 0 \\ X & -kQ  }
    \bmat{ \wss \\ \thetass}
    = \bmat{0 \\ v}
  \end{equation}
  Here we have used the property $\D S^\tp t x = (tI-L)S^\tp x$
  for any $x\in\R^n$. Since the graph is irreducible we have $Q \wss = 0$
  iff $\wss = \bwss\one$ for some scalar $\bwss$. Then~\eqref{eqn:releq}
  holds iff
  \begin{equation}
    \label{eqn:rank}
    v = M \bmat{\bwss \\ \thetass}
  \end{equation}
  where $M = \bmat{X\one  & -kQ}$. We now show
  that $\rank(M) = n$, which implies that $\bwss, \thetass$
  satisfying~\eqref{eqn:rank} must exist.  Let $z$
  be the positive left Metzler eigenvector of $Q$.
  Then for any nonzero nonnegative $x\in\R^n$, we
  have $z^\tp x > 0$, but since $z^\tp Q =0$ such $x$ cannot lie in
  the range of $Q$.  Since $k>0$ and $l \geq 0$, we therefore have
  $X\one  \not\in\range(Q)$. Since $Q$ is irreducible we have
  $\rank(Q) = n-1$ and hence $\rank(M) =n$.

  To show uniqueness, we have
  \begin{align*}
    z^\tp v &= z^\tp M \bmat{\bwss \\ \thetass}\\
    &= \bwss z^\tp X \one 
  \end{align*}
  which gives the desired expressions for $\bwss$ and $\thetass$.
\end{proof}

\paragraph{Relative dynamics.} We can now construct the relative dynamics. Suppose $\wss$ and
$\thetass$ satisfy the conditions of Theorem~\ref{thm:eqm}. Let
\[
\zeta(t) = \theta(t) - \wss t \one  - \thetass
\]
Then we have the autonomous functional differential equation
\begin{equation}
  \label{eqn:homog}
  \dot \zeta = \A \zeta
\end{equation}
We can also write this as
\[
\dot{\zeta}_i(t) = \kp  \sum_{j \mid j \to i}
\bl(
\zeta_j(t -  l_\itoj) - \zeta_i(t) 
\br)
\]
This is the well-studied \emph{consensus dynamics}.

\paragraph{Stability.} For the relative dynamics, 
the following result is known.
\begin{thm}
  Suppose the parameters $\mathcal P$ are admissible.
  For any continuous initial
  conditions $\phi_0:[-l^\text{max},0] \to \R^n$ there exists a continuous function
  $\zeta:[-l^\text{max},\infty) \to \R^n$ which
    satisfies~\eqref{eqn:homog} with initial conditions $\zeta(t)=\phi(t)$ for $t\leq 0$
    and a constant $\bzetass\in\R$ such that 
  \[
  \lim_{t \to \infty} \zeta(t) = \zetass
  \]
  where $\zetass = \bzetass\one$. 
  This convergence is exponential, and $\lim_{t \to \infty} \dot\zeta = 0$.
\end{thm}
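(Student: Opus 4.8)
The plan is to treat~\eqref{eqn:homog} as a linear retarded functional differential equation and to reduce the convergence claim to a statement about its characteristic roots. Existence of a continuous solution for each continuous history $\phi_0$ follows, as already noted, by the method of steps, so the content is the limiting behaviour. Looking for solutions of the form $\zeta(t)=e^{st}w$ and using $\D S^\tp e^{st}w = e^{st}E(s)S^\tp w$ with $E(s)=\operatorname{diag}(e^{-sl_1},\dots,e^{-sl_m})$, the equation $\dot\zeta=\A\zeta$ becomes $M(s)w=0$, where
\[
M(s) = sI - kD\bl(E(s)S^\tp - D^\tp\br).
\]
The characteristic roots are the zeros of $\det M(s)$, and for a retarded equation these form a discrete set with only finitely many in any half plane $\{\Re s \ge \beta\}$. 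My aim is to show that the only root with $\Re s \ge 0$ is $s=0$, that it is simple, and that the consensus direction $\one$ spans its eigenspace; exponential convergence to $\Span\one$ then follows from the spectral decomposition of the solution semigroup.

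First I would locate the roots in the closed right half plane by diagonal dominance. With no self-loops $M(0)=-kDB^\tp=-kQ$, the diagonal of $M(s)$ is $s+k\din_i$ and the off-diagonal entries in row $i$ are $-ke^{-sl_\jtoi}$ over $j\to i$. For $\Re s \ge 0$ and $l\ge 0$ we have $\abs{e^{-sl_\jtoi}}\le 1$, so the off-diagonal row sum is at most $k\din_i$, while $\abs{s+k\din_i}^2=(\Re s + k\din_i)^2+(\Im s)^2 \ge (k\din_i)^2$ with equality only at $s=0$. Hence for every $s\neq 0$ with $\Re s\ge 0$ the matrix $M(s)$ is strictly row diagonally dominant, thus nonsingular, so $s=0$ is the unique root in $\{\Re s\ge 0\}$. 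Since the graph is irreducible, $\rank Q=n-1$ and $\Null M(0)=\Null Q=\Span\one$, so the geometric multiplicity of $s=0$ is one.

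Next I would verify that $s=0$ is algebraically simple. Differentiating gives the pleasant identity $M'(0)=I+kDLS^\tp=X$, the same matrix appearing in Theorem~\ref{thm:eqm}. With right null vector $\one$ and left null vector $z$ (so $z^\tp Q=0$, $z>0$, $\one^\tp z=1$), the standard simple-root criterion for characteristic matrices states that $s=0$ is a simple zero of $\det M$ precisely when $z^\tp M'(0)\one \neq 0$. Here $z^\tp X\one>0$ is exactly the positivity already established in the proof of Theorem~\ref{thm:eqm} (indeed $(X\one)_i = 1 + k\sum_{j\to i} l_\jtoi\ge 1$ and $z>0$), so $s=0$ is simple. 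Combining this with the finiteness of roots in any right half plane, only $s=0$ lies in $\{\Re s\ge 0\}$ while there are finitely many in $\{\Re s\ge -1\}$, so there is a spectral gap $\delta>0$ with every nonzero root satisfying $\Re s\le -\delta$.

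Finally I would invoke the spectral decomposition for retarded equations (Hale--Verduyn Lunel): the state space $C([-\lmax,0],\R^n)$ splits into the one-dimensional generalized eigenspace $P$ of $s=0$ and a complementary invariant subspace on which the solution semigroup decays like $e^{-\delta' t}$ for any $\delta'<\delta$. Because $s=0$ is simple with eigenvector $\one$, the flow on $P$ is the constant $\bzetass\one$, where $\bzetass$ is the value of the spectral projection of $\phi_0$; therefore $\zeta(t)\to\bzetass\one$ exponentially. Since $\A$ annihilates constant vectors (using $S^\tp\one=D^\tp\one=\one$), $\dot\zeta=\A\zeta\to 0$ as well. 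I expect the main obstacle to be not the root location, which is clean, but the passage from the characteristic-root picture to genuine time-domain exponential convergence: one must justify the discreteness and right-half-plane finiteness of the spectrum, and the fact that the complementary subspace decays at exactly the spectral rate, which is where the functional-analytic machinery for retarded equations is essential. A self-contained alternative would be an inverse-Laplace/residue argument for $\hat\zeta(s)=M(s)^{-1}g(s)$, but making the contour shift and pole bookkeeping rigorous is of comparable difficulty.
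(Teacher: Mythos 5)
The paper does not actually prove this theorem: it is quoted as known, with asymptotic stability attributed to the Lyapunov--Razumikhin argument of~\cite{schmidt2009} and the exponential rate to~\cite{somarakis2014}. Your proposal is therefore a genuinely different route --- a frequency-domain argument via the characteristic matrix $M(s)=sI-kD\bl(E(s)S^\tp-D^\tp\br)$ --- and its core computations are correct: assuming no self-loops, the diagonal-dominance estimate $\abs{s+k\din_i}>k\din_i\geq\sum_{j\to i}k\abs{e^{-sl_\jtoi}}$ for $s\neq 0$ with $\Re s\geq 0$ places every nonzero root strictly in the open left half plane; $M(0)=-kQ$ has null space $\Span\{\one\}$ by irreducibility; and the simplicity test $z^\tp M'(0)\one=z^\tp X\one>0$ is exactly the quantity already exploited in the proof of Theorem~\ref{thm:eqm} (via Jacobi's formula, $\tfrac{d}{ds}\det M(s)\big|_{s=0}$ is a nonzero multiple of $z^\tp M'(0)\one$ because $\operatorname{adj}(M(0))$ is proportional to $\one z^\tp$). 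What you import wholesale --- discreteness of the spectrum, finiteness of roots in any right half plane, and the spectral splitting of $C([-\lmax,0],\R^n)$ with exponential decay on the complement of the eigenspace at $0$ --- is standard for retarded equations (Hale and Verduyn Lunel, Ch.~7), so the argument is complete modulo that citation, and you correctly identify it as the only nonelementary step. Relative to the Razumikhin route the paper leans on, your approach buys the sharp decay rate (the spectral abscissa over the nonzero roots) and an explicit formula for $\bzetass$ as the spectral projection of $\phi_0$ against the left eigenfunction, at the cost of infinite-dimensional semigroup machinery; the Razumikhin proof is more self-contained and generalizes more readily to time-varying or nonlinear settings but yields no rate. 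Two small points to tidy: note that strong connectivity (for $n\geq 2$) gives $\din_i\geq 1$ so the dominance inequality is not vacuous, and state explicitly that the graph has no self-loops, since a self-loop would place a delayed term on the diagonal of $M(s)$ and spoil the clean Gershgorin estimate.
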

For the asymptotic stability part of this result, a direct proof is
given in~\cite{schmidt2009} using Lyapunov-Razumikhin
functions.  Bounds on the
exponential rates are given in~\cite{somarakis2014}.  As discussed in
Section~\ref{sec:relatedwork}, there are several approaches to proving
variants of this result. Using this, we immediately have the following
corollary.
\begin{cor}
  \label{cor:wss}
  Suppose the parameters $\mathcal P$ are admissible,
  and $\theta$ satisfies the dynamics~\eqref{eqn:model}.
  Then for any initial conditions we have
  \[
  \lim_{t \to \infty} \dot\theta(t) = H v
  \]
\end{cor}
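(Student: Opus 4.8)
The plan is to reduce the corollary to the stability theorem for the consensus dynamics~\eqref{eqn:homog} through the relative-coordinate substitution already introduced. First I would apply Theorem~\ref{thm:eqm} to the admissible parameters $\mathcal P$ to obtain the relative equilibrium: writing $Hv = \wss = \bwss\one$ for a scalar $\bwss$, there is a $\thetass$ such that $\bwss t\one + \thetass$ solves~\eqref{eqn:model}. Given the particular solution $\theta$ of~\eqref{eqn:model} whose limit we wish to compute, together with its continuous initial data on $[-\lmax,0]$, I would set
\[
\zeta(t) = \theta(t) - \bwss t\one - \thetass,
\]
whose restriction to $[-\lmax,0]$ is again continuous and serves as initial data for $\zeta$.

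Next I would confirm that $\zeta$ obeys the homogeneous equation $\dot\zeta = \A\zeta$, which is exactly~\eqref{eqn:homog} from the relative dynamics paragraph. This follows because both $\theta$ and the equilibrium $\bwss t\one + \thetass$ satisfy the inhomogeneous equation $\dot\theta = \A\theta + v$; subtracting and using linearity of $\A$ cancels the forcing term $v$ and leaves $\dot\zeta = \A\zeta$.

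With $\zeta$ governed by the consensus dynamics, I would invoke the stability theorem with the initial data above. That theorem asserts not only that $\zeta(t)$ converges to $\bzetass\one$ but also, directly, that $\lim_{t\to\infty}\dot\zeta(t) = 0$. Differentiating the definition of $\zeta$ gives $\dot\theta(t) = \dot\zeta(t) + \bwss\one$ for all $t$; letting $t\to\infty$ yields
\[
\lim_{t\to\infty}\dot\theta(t) = \bwss\one = Hv,
\]
as required.

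The two substantive ingredients, namely existence of the relative equilibrium and exponential convergence of the consensus dynamics, are already supplied by Theorem~\ref{thm:eqm} and the stability theorem, so the remaining content is just the linear change of variables. The point I would check most carefully is that the equilibrium $\bwss t\one + \thetass$ is genuinely a solution of the delay equation --- so that the subtraction correctly cancels both the constant forcing $v$ and the delayed linear-in-$t$ contributions handled via the identity $\D S^\tp t x = (tI-L)S^\tp x$ --- which is precisely the assertion of Theorem~\ref{thm:eqm}.
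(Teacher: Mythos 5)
Your proposal is correct and follows exactly the paper's route: apply Theorem~\ref{thm:eqm} to get the relative equilibrium, subtract it to obtain $\zeta$ satisfying the homogeneous consensus dynamics~\eqref{eqn:homog}, and invoke the stability theorem's conclusion $\dot\zeta \to 0$ to deduce $\dot\theta \to \wss = Hv$. The additional care you take in verifying that the subtraction cancels the forcing term (via $\D S^\tp t x = (tI-L)S^\tp x$) is implicit in the paper's one-line observation but is exactly the right thing to check.
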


\begin{proof}
  Let $\wss = Hv$ and $\thetass$ be as in Theorem~\ref{thm:eqm}. Now
  let $\zeta(t) = \theta(t) - \wss t   - \thetass$ and observe that $\zeta$
  satisfies the relative dynamics $\dot\zeta = \A \zeta$. Since $\dot\zeta \to 0$, we have
  \[
  \lim_{t\to\infty} \dot\theta(t) = \wss
  \]
  as desired.
\end{proof}

\section{Limiting buffer occupancy}

We can now show that the buffer occupancy converges, and calculate its
steady-state value. Let $Q^\hash$ be a generalized inverse of $Q$,
and define $Y = B^\tp Q^\hash D - I$.
\begin{thm}
  \label{thm:bss}
  Suppose the parameters $\mathcal P$ are admissible, and $\theta$ and
  $\beta$ satisfy~\eqref{eqn:model}.  Then $\beta(t)$ converges
  as $t  \to \infty$.  The limit is given by
  $\lim_{t\to\infty} \beta(t) =  \betass$ where
  \[
  \betass = \lambda + Y L S^\tp H v  + \kp^{-1}  B^\tp Q^\hash (H - I) v
  \]
\end{thm}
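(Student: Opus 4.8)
The plan is to substitute the known asymptotic form of $\theta$ into the defining equation for $\beta$ and pass to the limit. From Theorem~\ref{thm:eqm} and the stability theorem, write $\theta(t) = \wss t + \thetass + \zeta(t)$, where $\wss = Hv = \bwss\one$, where $\thetass$ solves $kQ\thetass = (XH-I)v$, and where $\zeta$ solves the consensus dynamics $\dot\zeta = \A\zeta$ and converges to $\zetass = \bzetass\one$. Substituting into $\beta = \D S^\tp\theta - D^\tp\theta + \lambda$ splits $\beta(t)$ into a linear-in-$t$ part, a constant part coming from $\thetass$, and a transient part coming from $\zeta$.

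First I would handle the linear-in-$t$ term. Using the identity $\D S^\tp t x = (tI - L)S^\tp x$ with $x = \wss$, the ramp contributes $(tI - L)S^\tp\wss$ through the delayed source term and $tD^\tp\wss$ through the destination term; their difference is $tB^\tp\wss - LS^\tp\wss$. Since $\wss = \bwss\one$ and $B^\tp\one = 0$, the genuinely growing piece $tB^\tp\wss$ vanishes, leaving only the constant $-LS^\tp\wss$. This is the step that actually establishes convergence: it shows the one term that could grow without bound cancels.

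Next I would take $t\to\infty$ in the remaining pieces. The constant contribution of $\thetass$ is $B^\tp\thetass$. For the transient, since $\zeta(t)\to\zetass$, the delayed signal $\D S^\tp\zeta(t)$ tends to $S^\tp\zetass$, so the net $\zeta$ contribution is $B^\tp\zetass = \bzetass B^\tp\one = 0$. Collecting terms gives $\betass = \lambda - LS^\tp\wss + B^\tp\thetass$. It then remains to evaluate $B^\tp\thetass$ from $kQ\thetass = (XH-I)v$, writing $\thetass = k^{-1}Q^\hash(XH-I)v$ modulo $\Null Q = \Span\{\one\}$; because $B^\tp$ annihilates $\one$, the product $B^\tp\thetass = k^{-1}B^\tp Q^\hash(XH-I)v$ is independent of the choice of $\thetass$.

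Finally I would expand $X = I + kDLS^\tp$ so that $XH - I = (H-I) + kDLS^\tp H$, substitute $\wss = Hv$, and regroup: the terms $-LS^\tp Hv + B^\tp Q^\hash DLS^\tp Hv$ combine into $(B^\tp Q^\hash D - I)LS^\tp Hv = Y LS^\tp Hv$, while the remaining piece is $k^{-1}B^\tp Q^\hash(H-I)v$, which yields the claimed value of $\betass$. The main subtlety to get right is the well-definedness of $B^\tp\thetass$ despite the nonuniqueness of $\thetass$ (resolved by $B^\tp\one = 0$), together with careful bookkeeping of which terms in the delay operator survive the limit; the growth-cancellation in the ramp term is where convergence itself is secured.
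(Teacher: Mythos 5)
Your proposal is correct and follows essentially the same route as the paper's proof: decompose $\theta$ via the relative equilibrium of Theorem~\ref{thm:eqm}, use the identity $\D S^\tp t x = (tI-L)S^\tp x$ together with $B^\tp\one = 0$ to kill the growing term, reduce to $\betass = \lambda + B^\tp\thetass - LS^\tp\wss$, and then substitute $k\thetass = Q^\hash(XH-I)v$ modulo $\Span\{\one\}$ and expand $X = I + kDLS^\tp$. The only difference is presentational: you make the well-definedness of $B^\tp\thetass$ and the cancellation of the ramp more explicit than the paper does.
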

\begin{proof}
  As in Corollary~\ref{cor:wss} we have, as $t \to \infty$, 
  \[
   \theta(t) - \wss t - \thetass \to  \zetass
  \]
  hence
  \[
  (\D S^\tp - D^\tp) ( \theta(t) - \wss t - \thetass - \zetass) \to 0
  \]
  Now  using the fact that $B^\tp \wss = 0$ and $B^\tp \zetass = 0$ and the
  property that $\D S^\tp t x = (tI-L)S^\tp x$ for any $x\in\R^n$ results in
  \[
   (\D S^\tp - D^\tp)  \theta(t) \to  B^\tp\thetass  - LS^\tp \wss
  \]
  Therefore, with
  \[
  \betass = \lambda + B^\tp \thetass - L S^\tp \wss 
  \]
  we have $\beta(t) \to \betass$ as $t \to \infty$. We simplify this as follows. From
  Theorem~\ref{thm:eqm} we have $\kp Q\thetass = (XH-I)v$ and $\wss = Hv$. Therefore
  \[
  \kp\thetass = Q^\hash(XH - I) v + y
  \]
  for some $y\in\Null(Q)$.
  Since $\mathcal{G}$ is irreducible
  we have
  $\Null(Q) = \Span\{\one\}$ and so $B^\tp y =0$, therefore
  \[
  \betass = \lambda + \kp^{-1} B^\tp Q^\hash(XH-I)v  - L S^\tp Hv
  \]
  Now using $X= I  + \kp D L S^\tp$   the result follows.
\end{proof}

\paragraph{The zero-latency case.} Earlier work~\cite{reframing} has considered
models of \bt with the approximation that the latency is very small or
zero. We recall here the results in that case, and show that they are
consistent with the above formulae. First, we consider the limiting
frequency. When $l = 0$ the matrix $H$ becomes $H=\one z^\tp$ and so
Corollary~\ref{cor:wss} gives
\[
\wss = \lim_{t \to \infty} \dot\theta(t) = \one z^\tp v
\]
The matrix $H$ in this case is a projection. For the steady-state buffer occupancy,
Theorem~\ref{thm:bss}
reduces to
\[
\betass = \lambda + \kp^{-1} B^\tp Q^\hash(\one z^\tp - I)v
\]
Let the eigendecomposition of $Q$ be $QT = TD$ where
\[
D = \bmat{0 & 0 \\ 0 & \Lambda}
\quad
T = \bmat{\one & T_2}
\quad
T^{-1} = \bmat{z^\tp \\ V_2^\tp}
\]
for appropriate matrices $T_2$, $V_2$ and $\Lambda$. Define
\[
Q^S = T \bmat{0 & 0 \\ 0 &\Lambda^{-1}} T^{-1}
\]
Then if we choose $Q^\hash = Q^S$ we have $Q^\hash \one = 0$ and 
\[
\betass = \lambda - \kp^{-1} B^\tp Q^S v
\]
which matches the case with zero latency which was  analyzed in Lemma~5 of~\cite{reframing}.

\subsection{Steady initial conditions}  
A simple set of initial conditions for the system is as follows.  We
make the assumption that all of the controllers are turned on
simultaneously. This is an idealization, but often in practice the
controllers start up close together in time; for example, when all
machines are in a rack and power is applied to the rack.  In that
case, the oscillators start at their base frequencies $\wu$, and the
controllers start to correct frequency as soon as all of the links
achieve frequency lock at the deserializer modules in the nodes. Then
we model this by assuming that the that frequencies of the oscillators
$\dot\theta$ are equal to $\wu$ for the period of time
$[-l^\text{max}, 0)$. Another important property at startup is
that the buffer offsets are set equal to the initial occupancy;
that is, we set $\betaoff = \beta(0)$.

\begin{defn}
  We say that the system parameters $\mathcal P$
  satisfy \eemph{steady initial conditions} if
  \begin{equation}
    \label{eqn:sic}
    \betaoff = \lambda - LS^\tp \wu + B^\tp \theta^0
  \end{equation}
\end{defn}
We make this definition because the buffer occupancy satisfies
\[
\beta = \lambda + \D S^\tp \theta - D^\tp \theta
\]
Under steady initial conditions $\beta(0) = \betaoff$ and
for $t\leq 0$ we have $\theta(t)= t\wu  + \theta^0$, and these
two conditions together imply  that equation~\eqref{eqn:sic} holds.

\begin{lem}
  \label{lem:vsteady}
  Suppose the system starts in steady initial conditions.  Then
  \[
  v = X \wu - kQ \theta^0 
  \]
\end{lem}
\begin{proof}
  From Definition~\ref{def:v} we have
  \begin{align*}
    v &= \wu + kD(\lambda - \betaoff) \\
    &= \wu + kD(LS^\tp \wu - B^\tp \theta^0) \\
    &= X \wu - k Q \theta^0
  \end{align*}
  as desired.
\end{proof}

We now turn to the steady-state frequency under these initial conditions.
\begin{lem}
  Suppose that the system parameters $\mathcal P$ satisfy steady initial conditions.
  Then the steady-state frequency is
  \[
  \wss = H X \wu
  \]
\end{lem}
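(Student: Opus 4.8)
The plan is to combine the two results already established in this section with the characterization of the steady-state frequency from Corollary~\ref{cor:wss}. By that corollary, for any initial conditions the steady-state frequency is $\wss = Hv$, where $v$ is the input of Definition~\ref{def:v}. Under steady initial conditions, Lemma~\ref{lem:vsteady} gives the explicit form $v = X\wu - kQ\theta^0$, so the entire task reduces to substituting this expression into $\wss = Hv$ and simplifying.

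First I would write
\[
\wss = Hv = H\bl(X\wu - kQ\theta^0\br) = HX\wu - k\,HQ\,\theta^0,
\]
which isolates the desired term $HX\wu$ together with a residual term $-k\,HQ\,\theta^0$ that depends on the initial phase offset $\theta^0$. The crux of the argument is then to show that this residual vanishes, i.e.\ that $HQ = 0$. This is where the structure of $H$ does all the work: recall $H = \dfrac{\one z^\tp}{z^\tp X \one}$, where $z$ is the left Metzler eigenvector of $Q$ satisfying $z^\tp Q = 0$. Hence
\[
HQ = \frac{\one\, z^\tp Q}{z^\tp X \one} = \frac{\one\cdot 0}{z^\tp X \one} = 0,
\]
and the $\theta^0$-dependent term drops out entirely, leaving $\wss = HX\wu$.

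I do not expect a genuine obstacle here, since the computation is short; the only point requiring care is recognizing that the $\theta^0$ contribution disappears for \emph{all} choices of $\theta^0$, which is precisely the statement $HQ=0$. Conceptually this is the expected behavior: the initial offset $\theta^0$ enters only through $Q\theta^0$, a quantity lying in $\range(Q)$, and $H$ annihilates $\range(Q)$ because its left factor $z^\tp$ is the left null vector of $Q$. This same observation explains why $\wss$ is insensitive to the absolute phase origin and depends on the startup state only through the base frequencies $\wu$, consistent with $\wss = Hv$ being a rank-one image along $\one$.
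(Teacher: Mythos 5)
Your proposal is correct and follows the paper's proof exactly: both substitute $v = X\wu - kQ\theta^0$ from Lemma~\ref{lem:vsteady} into $\wss = Hv$ and then use $HQ = 0$. You simply spell out the justification for $HQ=0$ (namely $z^\tp Q = 0$), which the paper leaves implicit.
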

\begin{proof}
  From Theorem~\ref{thm:eqm} and Corollary~\ref{cor:wss} we have
  $\wss = Hv$. Now using Lemma~\ref{lem:vsteady} and
  the fact that $HQ = 0$ gives the desired result.
\end{proof}
An important observation here is that the matrix $HX$ is a
projection. The interpretation of this is as follows. Since $\range(H)
= \Span\{\one\}$ this means that if all of the unknown frequencies are
the same, then the steady-state frequency is equal to the unknown base
frequencies. This is as expected, since with all frequencies equal,
and with steady initial conditions, we will have all buffer occupancies
remain at the buffer offset point.

We now turn to the steady-state buffer occupancy.

\begin{thm}
  \label{thm:bsteady}
  Suppose that the system parameters $\mathcal P$ satisfy steady initial conditions.
  Then the steady-state buffer occupancy is
  \begin{multline}
    \label{eqn:bssc}
    \betass - \betaoff = (LS^\tp + YLS^\tp HX) \wu \\
    + k^{-1} B^\tp Q^\hash (H-I) X \wu
  \end{multline}
\end{thm}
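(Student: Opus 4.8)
The plan is to specialize the general formula for $\betass$ from Theorem~\ref{thm:bss} to the steady-initial-condition setting and then subtract $\betaoff$. Theorem~\ref{thm:bss} gives $\betass = \lambda + YLS^\tp Hv + k^{-1}B^\tp Q^\hash(H-I)v$, while the defining relation for steady initial conditions~\eqref{eqn:sic} reads $\betaoff = \lambda - LS^\tp\wu + B^\tp\theta^0$. Subtracting cancels the common $\lambda$ and leaves
\[
\betass - \betaoff = LS^\tp\wu - B^\tp\theta^0 + YLS^\tp Hv + k^{-1}B^\tp Q^\hash(H-I)v.
\]
The next step is to eliminate $v$ in favor of $\wu$ and $\theta^0$ using Lemma~\ref{lem:vsteady}, which gives $v = X\wu - kQ\theta^0$.

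The simplification of the two $v$-dependent terms relies on the identity $HQ = 0$, which is immediate since every row of $H$ is proportional to $z^\tp$ and $z^\tp Q = 0$. Consequently $Hv = HX\wu$, so the term $YLS^\tp Hv$ becomes $YLS^\tp HX\wu$ with no $\theta^0$ contribution. For the other term, $(H-I)v = (H-I)X\wu - k(HQ-Q)\theta^0 = (H-I)X\wu + kQ\theta^0$, so that
\[
k^{-1}B^\tp Q^\hash(H-I)v = k^{-1}B^\tp Q^\hash(H-I)X\wu + B^\tp Q^\hash Q\theta^0.
\]

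The key step --- and the only part that is not pure bookkeeping --- is to show that the two $\theta^0$ terms cancel, i.e.\ that $B^\tp Q^\hash Q\theta^0 = B^\tp\theta^0$. I would establish $B^\tp Q^\hash Q = B^\tp$ as follows. Because $Q^\hash$ is a generalized inverse, $QQ^\hash Q = Q$, which makes $P := Q^\hash Q$ idempotent; moreover $\rank(P) = \rank(Q)$, whence $\Null(P) = \Null(Q)$, and by irreducibility $\Null(Q) = \Span\{\one\}$. Thus $P$ is a projection with null space $\Span\{\one\}$, so $I - P$ maps into $\Span\{\one\}$. Since $B^\tp\one = 0$, this gives $B^\tp(I-P) = 0$, that is, $B^\tp Q^\hash Q = B^\tp$. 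With this, the terms $-B^\tp\theta^0$ and $B^\tp Q^\hash Q\theta^0$ annihilate one another.

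Collecting the surviving $\wu$ terms then produces $(LS^\tp + YLS^\tp HX)\wu + k^{-1}B^\tp Q^\hash(H-I)X\wu$, which is exactly~\eqref{eqn:bssc}. The main obstacle is justifying $\Null(Q^\hash Q) = \Null(Q)$ for an arbitrary generalized inverse rather than a specific pseudoinverse; once that rank argument is in place, the remainder is routine substitution and regrouping.
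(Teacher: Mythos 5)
Your proof is correct and follows essentially the same route as the paper: substitute $v = X\wu - kQ\theta^0$ from Lemma~\ref{lem:vsteady} into Theorem~\ref{thm:bss}, use $HQ=0$ and $B^\tp Q^\hash Q = B^\tp$, and cancel against the steady-initial-condition expression for $\betaoff$. The only difference is cosmetic (you subtract $\betaoff$ first), and your worry about $\Null(Q^\hash Q)=\Null(Q)$ is unnecessary --- $Q(I-Q^\hash Q)=0$ already puts $\range(I-Q^\hash Q)$ inside $\Null(Q)=\Span\{\one\}$, which $B^\tp$ annihilates, with no rank argument needed.
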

\begin{proof}
  From Theorem~\ref{thm:bss}, using  $v = X \wu - kQ \theta^0 $, we have
  \begin{align*}
    \betass
    &= \lambda + Y L S^\tp H v  + \kp^{-1}  B^\tp Q^\hash (H - I) v\\
    &= \lambda + Y L S^\tp H X \wu +  \kp^{-1}  B^\tp Q^\hash H X \wu
    - \kp^{-1}  B^\tp Q^\hash v
  \end{align*}
  The last term in this expression is
  \begin{align*}
    \kp^{-1}  B^\tp Q^\hash v &= \kp^{-1}  B^\tp Q^\hash X \wu - B^\tp Q^\hash Q \theta^0
    \\
    & = \kp^{-1}  B^\tp Q^\hash X \wu - B^\tp \theta^0
  \end{align*}
  where we have used the fact that $B^\tp Q^\hash Q = B^\tp$ since $\Null(Q) = \Null(B)$.
  Now we use the steady initial conditions property that
  \[
  \betaoff = \lambda - LS^\tp \wu + B^\tp \theta^0
  \]
  to conclude the desired result.
\end{proof}

This result is helpful in the design of \bt systems since it allows
determination of the limiting buffer occupancies given the physical
latencies $L$ and graph properties $Y,B,S,Q$.  Notice that the
relative buffer occupancy in~\eqref{eqn:bssc} is the sum of two
terms. The first term tends to zero as the latency tends to zero.

We perform a simple simulation using the Callisto
simulator~\cite{callisto} to illustrate this result. Callisto
simulates the abstract frame model, including both non-uniform
sampling and quantization, whereas the analysis here is for the linear
model. Nonetheless, the correspondence between the mathematical
predictions and the simulation are very close. Figure~\ref{fig:graph}
shows the graph which we use. All links have physical latency $l =
2.7\times 10^{-7}$, the gain is $k=0.25$, initial $\theta^0=0.1$, and
all links have logical latency $\lambda = 34$. In particular, these
parameters correspond to those of the open-source \bt hardware
at~\cite{qbay}. Figure~\ref{fig:freq} shows the frequency behavior,
including a black line showing the predicted
$\wss$. Figure~\ref{fig:mocc} shows the relative buffer occupancies
include black lines at the predicted steady-state values
from Theorem~\ref{thm:bsteady}.

\begin{figure}[ht!]
  \centerline{\includegraphics[width=0.6\linewidth]{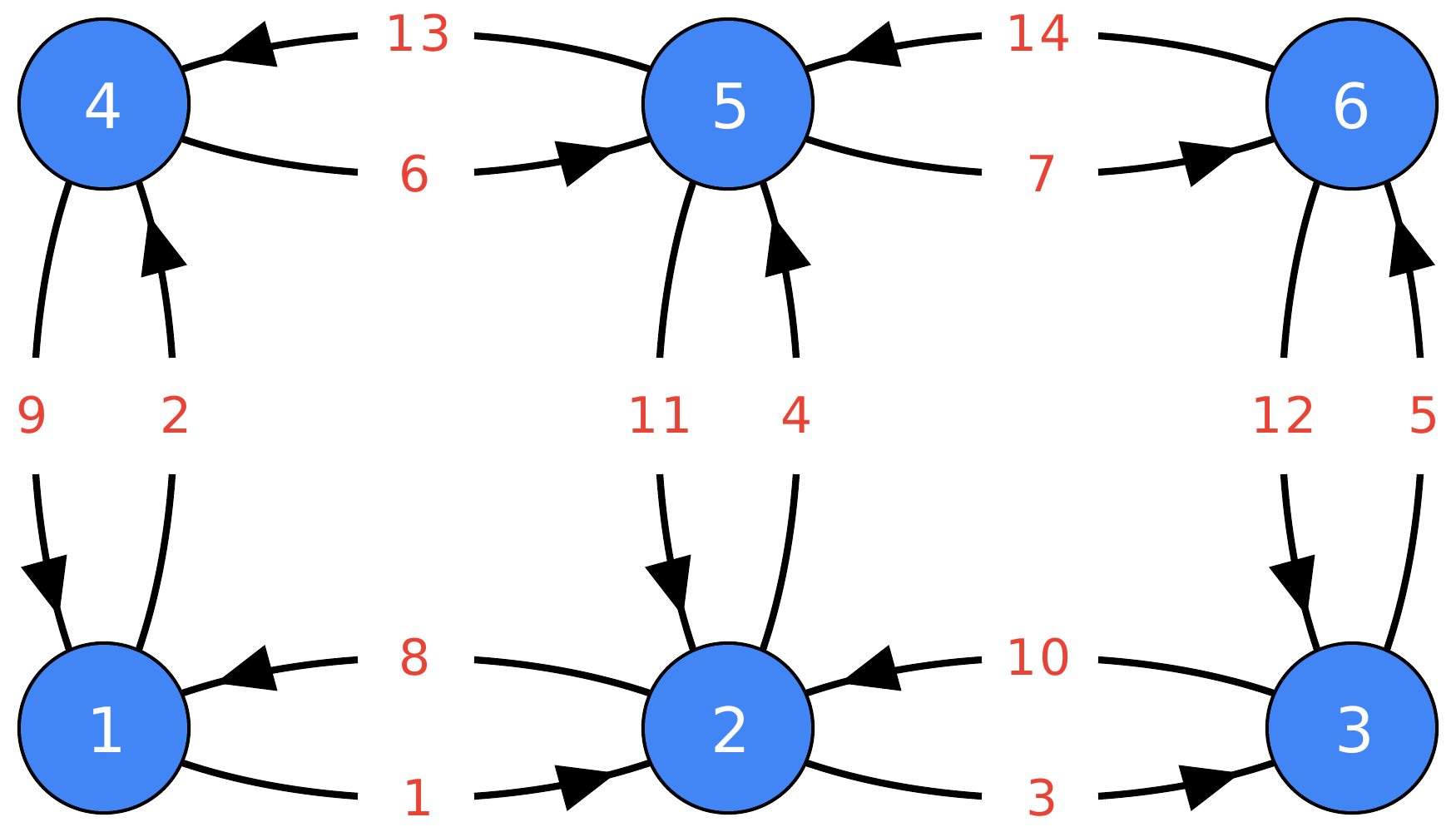}}
  \caption{Graph showing edge and node numbering}
  \label{fig:graph}
\end{figure}

\def\ts#1{\fboxsep=2pt\colorbox{white}{\small #1}}
\begin{figure}[ht!]
  \centerline{\begin{overpic}[width=0.9\linewidth]{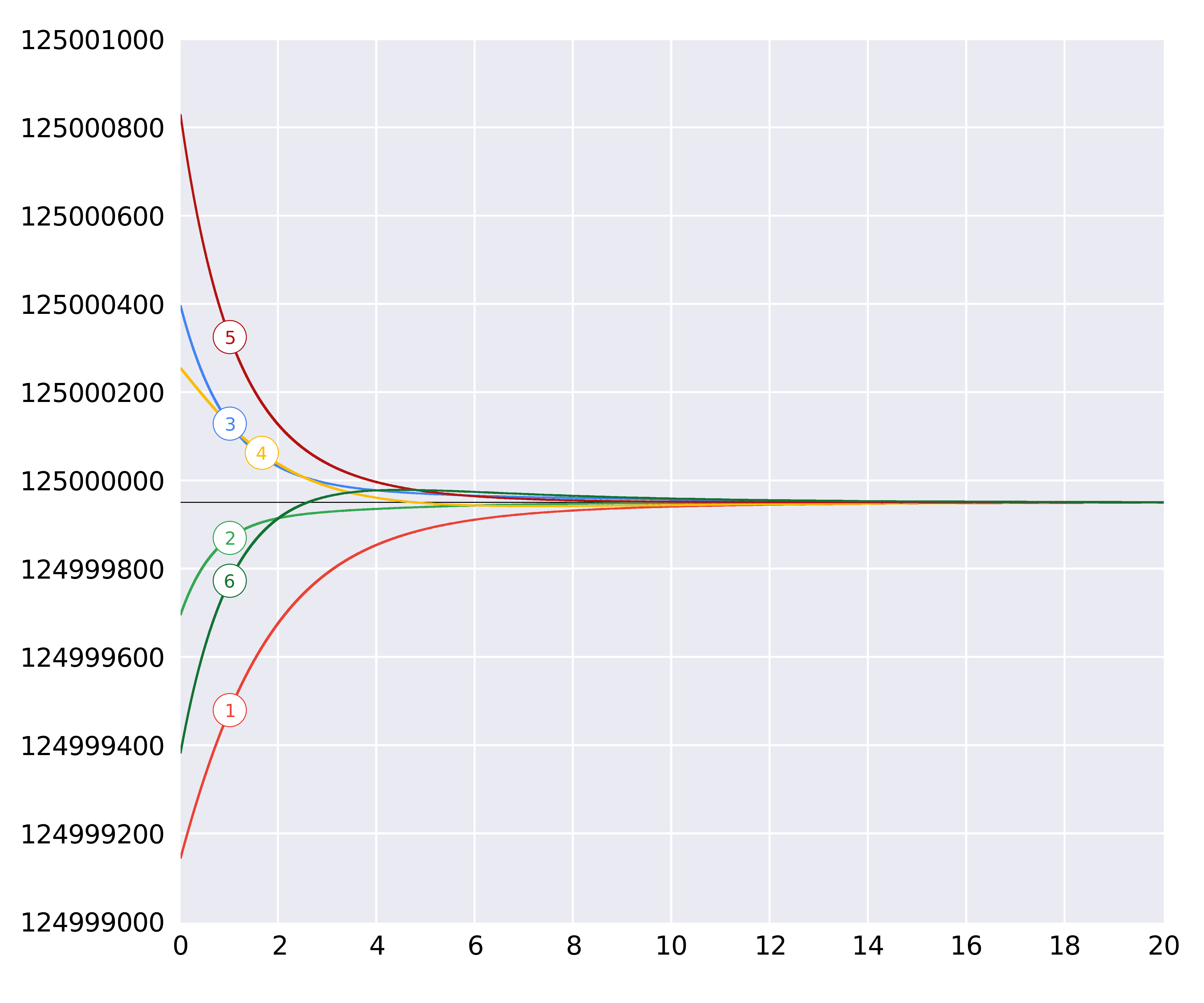}
      \put(55,0){\small $t$}
      \put(-3,40){\small $\dot\theta$}
    \end{overpic}}
    \caption{Per-node frequencies}
  \label{fig:freq}
\end{figure}

\begin{figure}[ht!]
  \centerline{\begin{overpic}[width=0.9\linewidth]{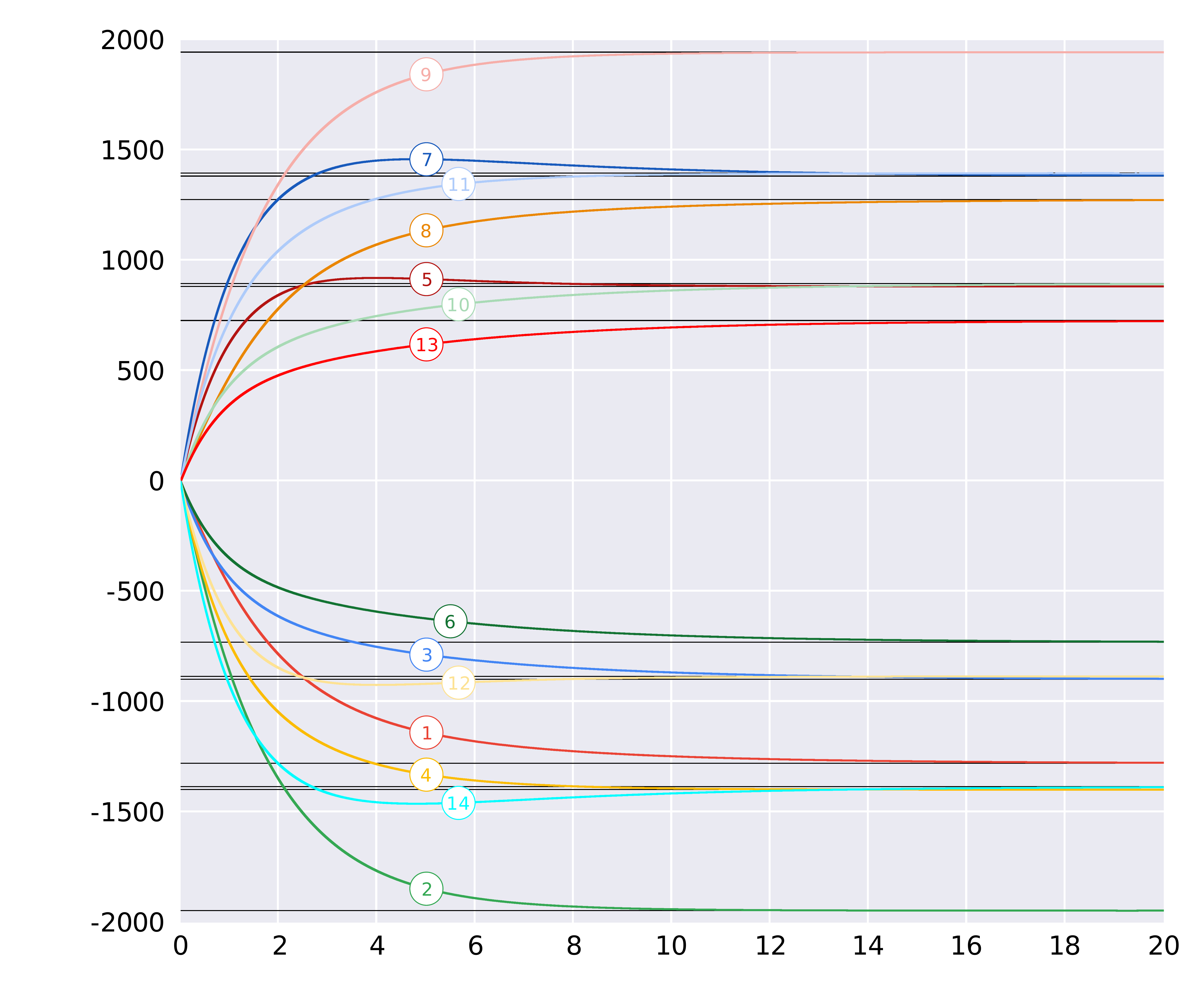}
      \put(55,0){\small $t$}
      \put(10,43){\llap{\small $\dot\beta-\betaoff$}}
    \end{overpic}}
    \caption{Per-edge buffer occupancies}
  \label{fig:mocc}
\end{figure}

\section{Related work}

\label{sec:relatedwork}

The challenge of achieving synchrony in distributed systems has been a
focus of research for decades. Traditional approaches often rely on
aligning local clocks to a global time reference, such as Coordinated
Universal Time (UTC), using protocols like Precision Time Protocol
(PTP)~\cite{ptp}.  However, these methods can suffer from limitations
in accuracy and scalability, especially in large-scale data center
environments.  The \bt approach is fundamentally different; instead of
synchronizing the absolute time of the clocks, \bt instead achieves
logical synchrony via controlling their average frequency.

This concept was first introduced in~\cite{spalink_2006} and further
developed in~\cite{bms,ls,res,reframing}, demonstrating its potential
for precise coordination without the overhead of traditional
synchronization protocols.  The abstract frame model is a
cycle-accurate model of \bt, developed in~\cite{bms}. A linear
approximate model of \bt was also developed, in which the
frequency dynamics are related to the well-known consensus
dynamics~\cite{olfati2004,jadbabaie,boyd2004}. A linear model for the
\bt mechanism adds two key components to consensus dynamics.  The
first is a forcing term, due to the continual increase of the clocks.
The presence of a forcing term, in combination with delays, has been
analyzed in~\cite{schmidt2009}. The second addition is buffer
occupancies~\cite{bms}, a differential output of central importance in
a practical \bt system.

In this paper, we make use of existing stability results for linear
consensus dynamics with delays. The stability of consensus dynamics
has been investigated in many papers, using a plethora of mathematical
techniques.  With these diverse
approaches comes several different sets of technical
assumptions on the graphs and the latencies. 
 Stability of linear delay systems is substantially more
complicated than for finite-dimensional systems. Several works make
use of the Nyquist theorem and variants thereof, in
particular~\cite{lee2006,munz2010nyquist,lestas2007}.  Lyapunov
methods and Razumikhin theory have also been used to prove stability of
consensus models with heterogeneous latencies
in~\cite{munz2008,munz2010,haddad,somarakis2015,moreau2004}.  A
different approach using fixed-point theory is used
in~\cite{somarakis2014}. Discrete-time models are considered
in~\cite{blondel2005,tsitsiklis1986}. 

\section{Conclusions}

This paper analyzed the steady-state buffer occupancy 
in \bt systems, a crucial factor determining overall 
system latency. By using a fluid approximation of the 
\bt model, we proved that buffer occupancy converges 
to a steady-state value and derived an explicit formula 
for this value as a function of system parameters. This 
formula incorporates network topology, physical latencies, 
and controller gains, providing valuable insights for 
system designers.  Understanding the long-term behavior of 
buffer occupancy enables optimization of buffer sizes, 
minimizing latency, and ensuring efficient resource 
utilization in \bt-based distributed systems.

Our analysis, however, relies on certain simplifying 
assumptions.  Future work should address the limitations of 
the fluid model by incorporating the effects of quantization 
due to discrete frame transmission.  Further investigation is 
also needed to analyze the impact of discrete-time control 
implementation, where controllers operate at specific sampling 
rates, as opposed to the continuous control assumed in this 
paper.  Finally, extending the analysis to include dynamic changes 
in network topology and node failures would enhance the practical 
applicability of these results.  Addressing these open questions 
will pave the way for a more comprehensive understanding of \bt 
system dynamics and enable the design of robust and efficient 
large-scale logically synchronous distributed systems.

\section{Acknowledgments}

The authors would like to thank C\u{a}lin Ca\c{s}caval, Pouya
Dormiani, and Martin Izzard for their insightful discussions and
invaluable feedback throughout this research project.

\bibliographystyle{abbrv}

\end{document}